  \theoremstyle{plain}
  \newtheorem{theorem}{Theorem}
  \newtheorem{lemma}{Lemma}  
  \newtheorem{corollary}[theorem]{Corollary}  
  \newtheorem{fact}{Fact}
  \newtheorem{observation}{Observation}
  \theoremstyle{definition}
  \newtheorem{definition}{Definition}
  \newtheorem{remark}[definition]{Remark}
\title{Streaming dictionary matching with mismatches\thanks{This is a full and extended version of the conference paper \cite{DBLP:conf/cpm/GawrychowskiS19}.}
\thanks{P. Gawrychowski was partially supported by the Bekker programme of the Polish National Agency for Academic Exchange (PPN/BEK/2020/1/00444) and the grant ANR-20-CE48-0001 from the French National Research Agency (ANR). T. Starikovskaya was partially supported by the grant ANR-20-CE48-0001 from the French National Research Agency (ANR).}}
\author[1]{Pawe\l{} Gawrychowski}
\author[2]{Tatiana Starikovskaya}
\affil[1]{University of Wrocław, 50-137 Wrocław, Poland\\\texttt{gawry@cs.uni.wroc.pl}}
\affil[2]{DI/ENS, PSL Research University, Paris, France\\
    \texttt{tat.starikovskaya@gmail.com}
}
\date{\vspace{-5ex}}
\newcommand{\polylog}[1]{\mathop{\mathrm{polylog} {\;#1}}}
\newcommand{\Oh}{\mathcal{O}}
\newcommand{\Ooh}{\tilde{\Oh}}
\newcommand{\prefixsearch}{\mathsf{PrefixSearch}}
\newcommand{\occ}{\mathrm{output}}
\newcommand{\Ham}{\mathsf{H}}
\newcommand*{\rom}[1]{\expandafter\@slowromancap\romannumeral #1@}
\begin{document}
\maketitle
\begin{abstract}
In the $k$-mismatch problem we are given a pattern of length $n$ and a text and must find all locations where the Hamming distance between the pattern and the text is at most $k$. A series of recent breakthroughs have resulted in an ultra-efficient streaming algorithm for this problem that requires only $\Oh(k \log \frac{n}{k})$ space and $\Oh(\log \frac{n}{k} (\sqrt{k \log k} + \log^3 n))$ time per letter [Clifford, Kociumaka, Porat, SODA 2019]. In this work, we consider a strictly harder problem called dictionary matching with $k$ mismatches. In this problem, we are given a dictionary of $d$ patterns, where the length of each pattern is at most $n$, and must find all substrings of the text that are within Hamming distance $k$ from one of the patterns. We develop a streaming algorithm for this problem with $\Oh(k d \log^k d \polylog{n})$ space and $\Oh(k \log^{k} d \polylog n + |\occ|)$ time per position of the text. The algorithm is randomised and outputs correct answers with high probability. On the lower bound side, we show that any streaming algorithm for dictionary matching with $k$ mismatches requires $\Omega(k d)$ bits of space. 
\end{abstract}

\section{Introduction}
\label{sec:introduction}
In the fundamental dictionary matching problem, we are given a dictionary of patterns and a text, and must find all the substrings of the text equal to one of the patterns (such substrings are called \emph{occurrences} of the patterns). 
The classical algorithm for dictionary matching is the one by Aho and Corasick~\cite{AC75}.
For a dictionary of $d$ patterns of length at most $n$, their algorithm uses $\Omega(n d)$ space and $\Oh(1+|\occ|)$ time per letter, where $\occ$ is the set of  occurrences of the patterns that end at this position. Apart from the Aho--Corasick algorithm, other word-RAM algorithms for exact dictionary matching include~\cite{BAP10,B12,BR13,CW79,CCGLPR99,FGGK15,DBLP:conf/wads/GolanKKP19,HKSTV10,KPR16,kosolobov_et_al:LIPIcs:2019:10484,WU92}. 

However, in many applications one is interested in substrings of the text that are close to but not necessarily equal to the patterns. This task can be naturally formalised as follows: given a dictionary of $d$ patterns of length at most $n$, and a text, find all substrings of the text within  distance $k$ from one of the patterns, where the distance is either the Hamming or the edit distance. In this work, we focus on the Hamming distance, and refer to this problem as \emph{dictionary matching with $k$ mismatches}. 

We give a brief survey of existing solutions in the word-RAM model, ignoring the special case of $k = 1$ that relies on very different techniques. The case of $d = 1$ was considered in~\cite{DBLP:journals/jal/AmirLP04,k-mismatch,gawrychowski_et_al:LIPIcs:2018:9066,LANDAU1986239}. The latest algorithm~\cite{gawrychowski_et_al:LIPIcs:2018:9066} uses $\Oh(n)$ space and $\Oh(\log^2 n+k \sqrt{\log (n) / n})$ amortised time per letter for constant-size alphabet. 
These algorithms can be generalised to~$d > 1$ patterns by running $d$ instances of the algorithm in parallel. One can also reduce the problem to dictionary look-up with $k$ mismatches or text indexing with $k$ mismatches. In the former problem, the task is to preprocess the dictionary of patterns into a data structure to support the following queries fast: given a string $Q$, find all patterns in the dictionary within Hamming distance $k$ from~$Q$. In the latter, the task is to preprocess the text so that given a pattern to be able to report all substrings of the text within Hamming distance $k$ from the pattern efficiently. These problems were considered in~\cite{kerratatree,EPIFANIO2007152,DBLP:conf/mfcs/GawrychowskiLS18,HHLS:2006,LSW:2008,TSUR2010339}. 
However, all of the above algorithms must at least store the dictionary in full, which in the worst case requires $\Omega(nd)$ bits of space. 

In this work, we focus on the streaming model of computation that was designed to overcome this restriction and allows developing particularly efficient algorithms. In the streaming model, we assume that the text arrives as a stream, one letter at a time. 
The space complexity of an algorithm is defined to be all the space used, including the space we need for storing the information about the pattern(s) and the text. The time complexity of an algorithm is defined to be the time we spend to process one letter of the text. The streaming model of computation aims for algorithms that use as little space and time as possible. All streaming algorithms we discuss in this paper are randomised and output correct answers with high probability\footnote{With high probability means with probability at least $1-1/n^c$ for any predefined constant $c>1$.}. 

Throughout the paper, we assume that the length of the text is $\Oh(n)$. If the text is longer, we can partition it into overlapping blocks of length $\Oh(n)$ and process each block independently.  

The first sublinear-space streaming algorithm for the dictionary matching problem with $d = 1$ was suggested by Porat and Porat~\cite{Porat:09}. For a pattern of length $n$, their algorithm uses $\Oh(\log n)$ space and $\Oh(\log n)$ time per letter. Later, Breslauer and Galil gave a $\Oh(\log n)$-space and $\Oh(1)$-time algorithm~\cite{Breslauer:11}. For arbitrary~$d$, Clifford et al.~\cite{streamdictionary} showed a streaming algorithm that uses $\Oh(d \log n)$ space and $\Oh(\log\log(n+d)+|\occ|)$ time per letter. Golan and Porat~\cite{GP17} showed an improved algorithm that uses the same amount of space and $\Oh(1+|\occ|)$ time per letter for constant-size alphabets.

Dictionary matching with $k$ mismatches has been mainly studied for~$d = 1$. The first algorithm was shown by Porat and Porat~\cite{Porat:09} by reduction to exact dictionary matching. The algorithm uses $\Oh(k^3 \log^7 n / \log \log n)$ space and $\Oh(k^2 \log^5 n / \log \log n)$ time. The complexity has been subsequently improved in~\cite{k-mismatch,Clifford:17,DBLP:conf/icalp/GolanKP18}. The current best algorithm uses only $\Oh(k \log \frac{n}{k})$ space and $\Oh(\log \frac{n}{k} (\sqrt{k \log k} + \log^3 n))$ time per letter~\cite{Clifford:17}. Golan et al.~\cite{DBLP:conf/cpm/GolanKKP20} studied space-time trade-offs for this problem. 
For $d > 1$, one can obtain the following result by a repeated application of the algorithm for $d = 1$~\cite{Clifford:17}:

\begin{corollary}\label{cor:streaming-k-mismatch}
For any $k \ge 1$, there is a randomised streaming algorithm for dictionary matching with $k$ mismatches that uses $\Ooh(d k)$ space and $\Ooh(d \sqrt{k})$ time per letter\footnote{Hereafter, $\Ooh$ hides a multiplicative factor polynomial in $\log n$.}. The algorithm has two-sided error and outputs correct answers with high probability.
\end{corollary}

\subsection{Our results}
In this work, we consider the problem of streaming dictionary matching with $k$ mismatches for arbitrary $d > 1$. 

As it can be seen, the time complexity of Corollary~\ref{cor:streaming-k-mismatch} depends on $d$ linearly, which is prohibitive for applications where the stream letters arrive at a high speed and the size of the dictionary is large, up to several thousands of patterns, as we must be able to process each letter before the next one arrives to benefit from the space advantages of streaming algorithms. In this work, we show an algorithm that uses $\Ooh(k d \log^k d)$ space and $\Ooh(k \log^{k} d + |\occ|)$ time per letter, assuming polynomial-size alphabet (Theorem~\ref{th:main-deamortised}). Our algorithm makes use of a new randomised variant of the $k$-errata tree (Section~\ref{sec:kerratatree}), a famous data structure of Cole, Gottlieb, and Lewenstein for dictionary matching with $k$ mismatches~\cite{kerratatree}. This variant of the $k$-errata tree allows to improve both the query time and the space requirements and can be considered as a generalisation of the $z$-fast tries~\cite{zfasttrie1,zfasttrie}, that have proved to be useful in many streaming applications. 

We also show that any streaming algorithm for dictionary matching with $k$ mismatches requires $\Omega(k d)$ bits of space (Lemma~\ref{lm:space_lower_bound}). This lower bound implies that for constant values of $k$ our algorithm is optimal up to polylogarithmic factors.

\section{Preliminaries}
\label{sec:prelim}
In this section, we give the definitions of strings, tries, and two hash functions that we use throughout the paper: Karp--Rabin fingerprints~\cite{KarpRabin} and sketches for the Hamming distance~\cite{Clifford:17}. 

\subsection{Strings and tries}
We assume an integer alphabet $\{1, 2, \dots, \sigma\}$ of size $\sigma = n^{\Oh(1)}$. A \emph{string} is a finite sequence of letters of the alphabet. For a string $S = S[1] S[2] \ldots S[m]$ we denote its length $m$ by $|S|$ and its substring $S[i] S[i+1] \ldots S[j]$, $1\le i < j \le m$, by $S[i,j]$. If $i = 1$, the substring $S[1,j]$ is referred to as a \emph{prefix} of~$S$. If $j = m$, $S[i,m]$ is called a \emph{suffix} of $S$. We say that a substring $S[i,j]$ is an \emph{occurrence} of a string $X$ in $S$ if  $S[i,j] = X$ and a \emph{$k$-mismatch occurrence} of $X$ in $S$ if the Hamming distance between $S[i,j]$ and $X$ is at most $k$. (Recall that the Hamming distance between two strings $X, Y$ of equal lengths is defined as the number of mismatches between them, in other words, as the number of positions where $X$ and~$Y$ differ.) The reverse of a string $S = S[1] S[2] \ldots S[m]$, denoted by $S^R$, is defined as $S[m] S[m-1] \ldots S[1]$. We denote the concatenation of strings $X,Y$ by $X \circ Y$. We use notation $X^m$ for the concatenation of $m$ copies of a string $X$.

A \emph{trie} is a basic data structure that can be used to store a set of strings. A trie is a tree satisfying the following properties:

\begin{enumerate}
\item Each edge is labelled by a letter of the alphabet;
\item Each two edges outgoing from the same node are labelled by different letters;
\item For each string $S$ in the set there is a node $v_S$ of the trie such that the concatenation of the labels of the edges from the root to $v_S$ is equal to $S$, and the concatenation of the labels in any path starting at the root of the trie is equal to a prefix of some string in the set. 
\end{enumerate}

The number of nodes in a trie is in the worst case linear in the total length of the strings. To improve the space requirements, we compactify the trie: namely, for every string~$S$ we mark the node $v_S$, and then replace each maximal path of unmarked nodes of degree one with an edge labelled by the concatenation of the letters on the edges in the path. The result is called a \emph{compact trie} (see Fig.~\ref{fig:trie} for an example). We call the nodes of the compact trie \emph{explicit nodes}, and the nodes that were deleted during the construction \emph{implicit nodes}. The number of the explicit nodes of the compact trie is linear in the number of strings it stores. 

\begin{figure}
\begin{center}
\begin{minipage}{0.4\textwidth}
\begin{tikzpicture}[every node/.style={draw=black,diamond,inner sep=0pt,minimum size=5pt}]
\node[fill=white] (root) at (0,0) {};
\node[fill=black] (v1) at (-1,-1) {};
\node[fill=white] (v2) at (1,-1) {};
\node[fill=black] (v3) at (-2,-2) {};
\node[fill=black] (v4) at (0,-2) {};
\node[fill=black] (v5) at (2,-2) {};

\node[draw=black,fill=black,circle,inner sep=0pt,minimum size=2pt] (u1) at (-0.5,-0.5) {};
\node[draw=black,fill=black,circle,inner sep=0pt,minimum size=2pt] (u2) at (-1.5,-1.5) {};
\node[draw=black,fill=black,circle,inner sep=0pt,minimum size=2pt] (u3) at (0.5,-1.5) {};
\node[draw=black,fill=black,circle,inner sep=0pt,minimum size=2pt] (u4) at (1.5,-1.5) {};

\draw (root) -- (u1) node[pos=0.5,rectangle,sloped,fill=white,draw=white] {a};
\draw (u1) -- (v1) node[pos=0.5,rectangle,sloped,fill=white,draw=white] {a};
\draw (root) -- (v2) node[pos=0.5,rectangle,sloped,fill=white,draw=white] {b};
\draw (v1) -- (u2)  node[pos=0.5,rectangle,sloped,fill=white,draw=white] {b};
\draw (u2) -- (v3)  node[pos=0.5,rectangle,sloped,fill=white,draw=white] {c};
\draw (v2) -- (u3)  node[pos=0.5,rectangle,sloped,fill=white,draw=white] {a};
\draw (u3) -- (v4)  node[pos=0.5,rectangle,sloped,fill=white,draw=white] {c};
\draw (v2) -- (u4)  node[pos=0.5,rectangle,sloped,fill=white,draw=white] {b};
\draw (u4) -- (v5)  node[pos=0.5,rectangle,sloped,fill=white,draw=white] {b};
\end{tikzpicture}
\end{minipage}
\begin{minipage}{0.4\textwidth}
\begin{tikzpicture}[every node/.style={draw=black,diamond,inner sep=0pt,minimum size=5pt}]
\node[fill=white] (root) at (0,0) {};
\node[fill=black] (v1) at (-1,-1) {};
\node[fill=white] (v2) at (1,-1) {};
\node[fill=black] (v3) at (-2,-2) {};
\node[fill=black] (v4) at (0,-2) {};
\node[fill=black] (v5) at (2,-2) {};

\draw (root) -- (v1) node[pos=0.5,rectangle,sloped,fill=white,draw=white] {aa};
\draw (root) -- (v2) node[pos=0.5,rectangle,sloped,fill=white,draw=white] {b};
\draw (v1) -- (v3)  node[pos=0.5,rectangle,sloped,fill=white,draw=white] {cb};
\draw (v2) -- (v4)  node[pos=0.5,rectangle,sloped,fill=white,draw=white] {ca};
\draw (v2) -- (v5)  node[pos=0.5,rectangle,sloped,fill=white,draw=white] {bb};
\end{tikzpicture}
\end{minipage}
\end{center}
\caption{The trie (left) and the compact trie (right) for a dictionary $\{aa, aabc, bac, bbb\}$. The nodes of the trie that we delete are shown by circles, marked nodes (nodes labelled by the dictionary patterns) are black.}
\label{fig:trie}
\end{figure}

\subsection{Fingerprints and sketches}
Let us first give the definition of Karp--Rabin fingerprints that can be used to decide whether two strings are equal.

\begin{definition}[Karp--Rabin fingerprints~\cite{KarpRabin}]\label{def:kr}
For a fixed prime $p$ and $r \in [0,p-1]$ chosen u.a.r., the Karp–-Rabin fingerprint of a string $X = X[1] X[2] \dots X[m]$ is defined as a quadruple $\Phi(X) = (\varphi(X), \varphi(X^R), r^{m} \bmod p, r^{-m} \bmod p)$, where $\varphi(X) = \sum_{i = 1}^m X[i] \cdot r^{m-i} \bmod p$ and $\varphi(X^R) = \sum_{i = 1}^m X[i] \cdot r^{i-1} \bmod p$.
\end{definition}

Note that this definition slightly differs from the standard one and in particular  given the Karp--Rabin fingerprint of $X$ we can compute the Karp--Rabin fingerprint of $X^R$ in $\Oh(1)$ time and space (by simply reversing the order of $\varphi(X)$ and $\varphi(X^R)$).

\begin{fact}[{\cite{Porat:09}}]\label{fct:KR}
For $r \in [0,p-1]$ chosen u.a.r., the probability of two distinct strings of equal lengths $m \le n$ over the integer alphabet $[0, p-1]$ to have equal Karp--Rabin fingerprints is at most $n/p$.
\end{fact}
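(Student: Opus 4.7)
The plan is to reduce the collision event to a standard question about roots of a nonzero polynomial over $\mathbb{F}_p$. First I would observe that for two distinct strings $S, T$ of the same length $\ell$, the last two components of their fingerprints, $r^\ell \bmod p$ and $r^{-\ell} \bmod p$, depend only on $\ell$ and coincide automatically. Hence the collision $\Phi(S) = \Phi(T)$ is contained in the event $\varphi(S) = \varphi(T)$, and it suffices to bound the probability of the latter.

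Next I would rewrite $\varphi(S) - \varphi(T) = \sum_{i=1}^{\ell} (S[i] - T[i]) \cdot r^{\ell - i} \pmod{p}$ and view it as a polynomial $f(r) \in \mathbb{F}_p[r]$ of degree at most $\ell - 1 \le m - 1$. The key observation is that $f$ is not the zero polynomial: because $S[i], T[i] \in [0, p-1]$, each difference $S[i] - T[i]$ lies in $(-p, p)$ and so vanishes modulo $p$ only when it is exactly $0$ as an integer. Since $S \neq T$, at least one of these differences is nonzero, hence at least one coefficient of $f$ is nonzero in $\mathbb{F}_p$.

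Finally, a nonzero polynomial of degree at most $\ell - 1$ over the field $\mathbb{F}_p$ has at most $\ell - 1$ roots, so the probability that a uniformly random $r \in [0, p-1]$ is a root is at most $(\ell - 1)/p \le m/p$, matching the claimed bound.

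The only subtle point — more a sanity check than a real obstacle — is the verification that $f$ is nonzero as an element of $\mathbb{F}_p[r]$, which is precisely why the fact restricts the alphabet to $[0, p-1]$: without this restriction the coefficient vectors of two distinct strings could agree modulo $p$, yielding a collision for every $r$.
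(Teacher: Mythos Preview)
Your argument is correct and is exactly the standard proof of this fact. The paper itself does not supply a proof; the statement is recorded as a well-known Fact about Karp--Rabin fingerprints, so there is no alternative approach in the paper to compare against. Your reduction to a nonzero polynomial of degree at most $\ell-1$ over $\mathbb{F}_p$ and the root-counting bound is precisely what is intended, and your remark about why the alphabet restriction $[0,p-1]$ is needed to guarantee nonzero coefficients modulo $p$ is on point.
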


Below we assume that $p$ is chosen sufficiently large to guarantee that the collision probability is inverse-polynomial in $n$. 

\begin{fact}\label{fct:KR_concat}
We can construct one of the fingerprints $\Phi(X)$, $\Phi(Y)$, or $\Phi(X \circ Y)$ given the other two in $\Oh(1)$ time and space. 
\end{fact}

We now remind the definition of $k$-mismatch sketches that can be used to decide whether two strings are at Hamming distance at most $k$.

\begin{definition}[$k$-mismatch sketch~\cite{Clifford:17}]
For a fixed prime $p$ and $r \in [0,p-1]$ chosen u.a.r., the $k$-mismatch sketch $sk_{k}(S)$ of a string $S$ of length $m$ is defined as a tuple $(\phi_0(S), \ldots, \phi_{2k}(S)$, $\phi'_0(S), \ldots, \phi'_{k}(S), \Phi(S))$, where $\phi_j(S) = \sum_{i = 1}^{m} S[i] \cdot i^{j} \bmod p$ and $\phi'_j(S) = \sum_{i = 1}^{m} S[i]^2 \cdot i^{j} \bmod p$ for $j \ge 0$.
\end{definition}

\begin{lemma}[{\cite[Proposition 3.1]{Clifford:17}}]\label{lm:compute_dist}
Let $X,Y$ be two strings of length $m \le n$. Given the sketches $sk_k(X)$ and $sk_k(Y)$, there is a randomised $\Ooh(k)$-time and $\Oh(k)$-space algorithm that reports the set $\{(p, X[p], Y[p]) : X[p] \neq Y[p]\}$ if the Hamming distance between $X$ and $Y$ is at most $k$, and otherwise it simply outputs a message ``The distance is larger than~$k$.''. The algorithm has two-sided error and outputs correct answers with high probability.
\end{lemma}

\begin{corollary}[{of~\cite[Proposition 3.1]{Clifford:17}}]\label{cor:sketch_concat}
We can construct one of the sketches $sk_k(X)$, $sk_k(Y)$, or $sk_k(X \circ Y)$ given the other two in $\Ooh(k)$ time using $\Oh(k)$ space, provided that all concerned strings are over the alphabet $[0,p-1]$ and are of length at most~$n$. Furthermore, we can compute $sk_k(X^m)$ in $\Ooh(k)$ time and $\Oh(k)$ space as well under the same assumption. 
\end{corollary}

\section{The randomised $k$-errata tree}\label{sec:kerratatree}
Cole, Gottlieb, and Lewenstein~\cite{kerratatree} introduced a data structure called the $k$-errata tree. The $k$-errata tree is a data structure that supports dictionary look-up with $k$ mismatches queries: Given a query string $Q$, find all patterns in the dictionary that are at Hamming distance at most~$k$ from $Q$ or one of its prefixes\footnote{The query algorithm of~\cite{kerratatree} returns only those patterns that are within Hamming distance $k$ from $Q$ itself, but considering prefixes as well does not change the query time and is more suitable for our purposes. We explain the necessary modifications in Section~\ref{sec:reminderkerratatree}.}. In this section, we introduce a new data structure  which we call the randomised $k$-errata tree. This data structure uses less space than the $k$-errata tree, has better query time (under a certain assumption, see below), and outputs the desired patterns correctly with high probability. 

\subsection{Reminder: the $k$-errata tree}\label{sec:reminderkerratatree}
For completeness, we first remind the definition of the $k$-errata tree and give an outline of the query algorithm. The $k$-errata tree is based on the heavy-path decomposition of a tree:

\begin{definition}
The heavy path of a tree $\mathcal{T}$ is the path that starts at the root of $\mathcal{T}$ and at each node $v$ on the path descends to the child with the largest number of leaves in its subtree (\emph{heavy} child), with ties broken arbitrarily. The heavy path decomposition is defined recursively, namely, it is defined to be a union of the heavy path of $\mathcal{T}$ and the heavy path decompositions of the off-path subtrees of the heavy path. 
\end{definition}

A well-known property of the heavy-path decomposition is that any root-to-leaf path crosses $\Oh(\log |\mathcal{T}|)$ heavy paths, where $|\mathcal{T}|$ is the number of nodes of $\mathcal{T}$. This property is essential for the analysis of the space complexity and the query time of the $k$-errata tree.

\paragraph{Data structure.}
Consider a dictionary $\mathcal{D}$ of $d$ patterns of maximal length $m$. We start with the compact trie~$\mathcal{T}$ for the dictionary $\mathcal{D}$, and decompose it into heavy paths. 

During the recursive step, we construct a number of new compact tries. For each heavy path $H$, and for each node $u \in H$ consider the off-path trees hanging from $u$. First, we create a \emph{vertical substitution trie} for $u$. Let $a$ be the first letter on the edge $(u,v) \in H$. Consider an off-path tree hanging from $u$, and let $b \neq a$ be the first letter on the edge from $u$ to this tree. For each pattern in this off-path tree, we replace $b$ with~$a$. We consider a set of patterns obtained by such  substitution for all off-path trees hanging from~$u$ and build a new compact trie for this set, which we call the vertical substitution trie. 

Next, we create \emph{horizontal substitution tries} for the node $u$. We create a separate horizontal substitution trie for each off-path tree hanging from~$u$. To do so, we take the patterns in it and cut off the first letters up to and including the first letter on the edge from $u$ to this tree, and then build a compact trie on the resulting set of patterns. 

We finally group the substitution tries. In more detail, for each heavy path we consider its vertical substitution tries and build a weight-balanced tree, where the leaves of the weight-balanced tree are the vertical substitution tries, in the top-down order, and for each node of the tree, we create a new trie by merging the tries below~it. For each of these group vertical substitution tries we build the $(k-1)$-errata tree. We group the horizontal substitution tries in a similar way, namely, we consider each node $u$ and build a weight-balanced tree on the horizontal substitution tries that we created for the node $u$. 

From the construction, it follows that the $k$-errata tree is a set of compact tries, and each string $S$ in the tries originates from a pattern in the dictionary $\mathcal{D}$. We mark the end of the path labelled by $S$ by the id of the pattern it originates from. Moreover, one can show the following upper bound on the size of the $k$-errata tree that follows from the property of the heavy-path decomposition mentioned above:

\begin{lemma}[{\cite[Lemma 10]{kerratatree}}]\label{lm:kerratatreesize}
The size of the $k$-errata tree is $\Oh(d \log^k d)$.
\end{lemma}

\paragraph{Queries.} For simplicity, we first explain how to find the patterns in $\mathcal{D}$ that are within Hamming distance $k$ from $Q$ itself, and then explain how to modify the query algorithm to retrieve the patterns that are within Hamming distance $k$ from $Q$ or one of its prefixes.

The query algorithm is recursive, and is based on a procedure called $\prefixsearch$. This procedure takes three arguments: a compact trie, a starting node $u$ (explicit or implicit) in this trie, and a query string $Q'$, and outputs a pointer to the end of the longest path starting at $u$ and labelled by a prefix of $Q'$. 

For the purposes of recursion, we introduce a mismatch credit~--- the number of mismatches that we are still allowed to make. The algorithm starts with the mismatch credit $\mu = k$ and runs a $\prefixsearch$ in the trie $\mathcal{T}$ for the query string $Q$, starting from the root. If $\mu = 0$ and the path is labelled by $Q$, the algorithm returns the ids of the patterns in $\mathcal{D}$ that are associated with the end of the path. Otherwise, consider the heavy paths $H_1, H_2, \dots, H_j$ traversed by the $\prefixsearch$. Let $u_i$ be the last node of the heavy path $H_{i}$, $1 \le i \le j$,
visited by $\prefixsearch$. Note that for $i < j$, $u_i$ is necessarily an explicit node of $\mathcal{T}$, and for $i = j$ it can be an implicit node. Divide all the patterns in $\mathcal{D}$ into four groups: (\rom{1}) \label{tp:i} Patterns hanging off a node $u$ in a heavy path $H_i$, where~$u$ is located above $u_i$, $1 \le i \le j$; (\rom{2}) \label{tp:ii} Patterns in the subtrees of $u_i$'s children not in the heavy path $H_{i+1}$, for $1 \le i < j$; (\rom{3}) \label{tp:iii} Patterns in the subtree of the node in $H_j$ that is just below $u_j$; (\rom{4}) \label{tp:iv} If $u_j$ is a node, then patterns in the subtrees of $u_j$'s children not in the heavy path~$H_{j}$.

The algorithm processes each of the pattern groups independently. Consider a pattern $P$ in group~\rom{1}. Suppose that it hangs from a node $u \in H_i$, where $u$ is above~$u_i$, and let $\ell$ be the length of the label of $u$. We have that $Q$ and $P$ have a mismatch at the position~$\ell + 1$. When creating the vertical substitution trie for $u$, we removed this mismatch. Consider the weight-balanced tree for $H_i$ and the minimal set of nodes containing the vertical substitution tries for the nodes in $H_i$ above $u$. To finish the recursive step, we call the algorithm with the mismatch credit $\mu - 1$ for the $(k-1)$-errata trees that we built for these nodes. The patterns of groups~\rom{2} and~\rom{4} are processed in a similar way but using the $(k-1)$-errata trees for the horizontal substitution trees. Finally, to process the patterns of group~\rom{3}, we call the algorithm with mismatch credit~$\mu-1$ starting from the node that follows $u_j$ in~$H_j$.

\begin{lemma}[{\cite[Lemma 11]{kerratatree}}]\label{lm:kerratatreetime}
The query algorithm makes $\Oh(\log^k d)$ $\prefixsearch$ calls.
\end{lemma}

Cole, Gottlieb, and Lewenstein implemented $\prefixsearch$ deterministically:

\begin{lemma}[{\cite{kerratatree}}]
Using $\Oh(nd + d \log^k d)$ extra space, the $\Oh(\log^k d)$ $\prefixsearch$ calls can be answered in $\Oh(n + \log^k d \log \log n)$ time. 
\end{lemma}

We will use this solution for the case when $\log^k d \ge n$, but in the general case it is too expensive for our purposes. In the next section, we will show a randomised implementation of $\prefixsearch$ which requires both less space and less time. 

\begin{remark}
We will use the $k$-errata tree to retrieve the patterns that are within Hamming distance $k$ from the query string $Q$ or from one of its prefixes. Recall that we mark each node of the $k$-errata tree corresponding to an end of a dictionary pattern. Furthermore, during the preprocessing step, we compute a pointer from each node to its nearest marked ancestor. At the end of each $\prefixsearch$ we follow the pointers and retrieve the patterns corresponding to the marked nodes between the end and the start of the $\prefixsearch$. The number of the $\prefixsearch$ operations that we perform does not change.
\end{remark}

\subsection{Randomised implementation of the k-errata tree}\label{sec:randomisation}
We are now ready to define the randomised $k$-errata tree. Apart from the sketches for the Hamming distance, we will use the following result of Belazzougui et al.~\cite{zfasttrie}:

\begin{theorem}[$z$-fast tries~\cite{zfasttrie}]\label{th:zfast}
Consider a string $S$ and suppose that we can compute the Karp–-Rabin fingerprint of any prefix of $S$ in $t_\varphi$ time. A compact trie on a set of $r$ strings of length at most $n$ can be stored in $\Oh(r)$ space to support the following queries in $\Oh(t_\varphi \cdot \log n)$ time: Given $S$, find the highest node $v$ such that the longest prefix of $S$ present in the trie is a prefix of the label of the root-to-$v$ path. The answer is correct with high probability. 
\end{theorem}

In other words, the $z$-fast trie can be considered as a randomised implementation of the compact trie. In the randomised implementation of the $k$-errata tree, we replace each compact trie with the $z$-fast trie. This gives an efficient implementation of all $\prefixsearch$ queries if $u$ is the root of a compact trie, however, the general case requires more work.

\begin{lemma}\label{lm:randomised-k-errata}
Assume $\log^k d < n$. A dictionary of $d$ patterns of maximal length $m$ can be preprocessed into a data structure which we call \emph{randomised $k$-errata tree} that uses $\Ooh(k d \log^k d)$ space and allows retrieving all the patterns that are within Hamming distance $k$ from $Q$ or one of its prefixes in $\Ooh(k \log^k d + |\occ|)$ time, assuming that we know the $k$-mismatch sketches of all prefixes of~$Q$. The error is two-sided and the answer is correct with high probability.
\end{lemma}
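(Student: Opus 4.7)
The plan is to follow the framework of the deterministic $k$-errata tree of Cole, Gottlieb, and Lewenstein and only change how its $\prefixsearch$ queries are implemented. As recalled in the excerpt, a dictionary look-up with $k$ mismatches for a string $Q$ reduces to $\Oh(\log^k d)$ $\prefixsearch$ queries on the compact tries forming the $k$-errata tree, each of the form ``starting at a given node or edge position $u$ in a fixed trie $\tau$, find the longest downward path whose label is a prefix of a given suffix of $Q$''. The total size of all the tries is $\Oh(d \log^k d)$.

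For the randomised version, I would replace every compact trie of the $k$-errata tree by a $z$-fast trie (Fact~\ref{fact:zfast}), and additionally store a $k$-mismatch sketch of the root-to-$v$ label at each node $v$, contributing $\Ooh(k d \log^k d)$ to the space. Since a $k$-mismatch sketch contains a Karp--Rabin fingerprint and we are given sketches of every prefix of $Q$, the Karp--Rabin fingerprint of any substring of $Q$ is computable in $\Oh(1)$ time; hence $t_\varphi = \Oh(1)$ in Fact~\ref{fact:zfast} and every root-based $\prefixsearch$ takes $\Oh(\log m)$ time.

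The main obstacle is that the $\prefixsearch$ queries generated inside the $k$-errata tree may start at an arbitrary $u$ (or a position on an edge), whereas a $z$-fast trie natively answers only the root-based variant. I would deal with this by storing at each node $u$ the Karp--Rabin fingerprint of its root-to-$u$ label and by running the $z$-fast search restricted to the subtree of $u$: the binary search on depth proceeds as usual, but each probe compares the pre-stored fingerprint of the root-to-$v$ label against the fingerprint of the concatenation of the root-to-$u$ label with a prefix of the queried suffix of $Q$, which, thanks to the stored fingerprints and the standard concatenation identities for $\varphi$, is computable in $\Oh(1)$ time. Edge positions are handled by routine bookkeeping, and the full formalisation is deferred to Appendix~\ref{sec:randkerrata}.

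Putting the pieces together, each $\prefixsearch$ call costs $\Oh(\log m)$ time, and each candidate pattern it returns is verified against the relevant prefix of $Q$ using Lemma~\ref{lm:compute_dist} and the stored sketches in $\Ooh(k)$ time. Summing over all $\Oh(\log^k d)$ calls yields $\Ooh(k \log^k d + occ)$ query time and $\Ooh(k d \log^k d)$ space, as claimed. Correctness with high probability follows from a union bound over all Karp--Rabin fingerprints and $k$-mismatch sketches manipulated, each of which fails with only inverse-polynomial probability.
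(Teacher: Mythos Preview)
Your high-level plan matches the paper: replace each compact trie by a $z$-fast trie and store a $k$-mismatch sketch at every node, then implement the $\Oh(\log^k d)$ $\prefixsearch$ calls on top of this. The space and time accounting are also right.

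Where you diverge from the paper is in how a non-root $\prefixsearch$ is answered. You propose to compute the fingerprint of a prefix of $S\cdot Q''$ (with $S$ the label of $u$ and $Q''$ the remaining suffix of $Q$) by concatenating the stored fingerprint of $S$ with a fingerprint of a substring of $Q$, and to restrict the $z$-fast binary search to depths at least $|S|$. The paper instead observes that $S$ and the corresponding prefix of $Q'$ differ in at most $k$ positions, uses the stored sketch of $S$ together with the sketch of that prefix of $Q$ to list those mismatches (Lemma~\ref{lm:compute_dist}), and then obtains each needed fingerprint by taking the fingerprint of the appropriate substring of $Q$ and ``fixing'' the at most $k$ mismatching positions, at cost $\Ooh(k)$ per probe. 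Your concatenation idea is a legitimate alternative and even shaves a factor $k$ per probe, but it does require that the $z$-fast search can be started from an arbitrary node; you should state explicitly that initialising the interval to $[|S|,m]$ keeps all probed depths above $|S|$, so that only fingerprints you can actually form are needed.

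The real gap is your treatment of edge positions. This is not routine bookkeeping, and it is precisely where the stored $k$-mismatch sketches are indispensable in the paper. A $z$-fast query tells you only the edge on which the $\prefixsearch$ ends, not the exact position. For the group-(III) continuation you must know how many mismatches lie between the current (unknown) position and the lower endpoint of the edge, because the recursion branches and terminates based on the remaining mismatch credit. The paper resolves this by comparing the stored $k$-mismatch sketch of the lower endpoint's label with the sketch of the matching-length prefix of $Q'$ via Lemma~\ref{lm:compute_dist}: if the Hamming distance fits in the remaining credit, it jumps to the lower endpoint with the credit reduced accordingly; otherwise it stops. Your proposal instead uses the node sketches only for a final ``verification'' of returned patterns. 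That verification is unnecessary (the $k$-errata recursion already tracks mismatches exactly via the credit), and it cannot replace the in-recursion credit accounting: without knowing how many mismatches were consumed on the edge, you cannot bound the number of recursive calls by $\Oh(\log^k d)$, nor guarantee that the patterns you reach are within distance $k$. So the sketches must be used during traversal, as the paper does, not merely at the end.
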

\begin{proof}
Recall from above that the $k$-errata tree is a collection of compact tries. In the randomised version of the $k$-errata tree, we replace each of them with a $z$-fast trie. We also store the $k$-mismatch sketch of the label of every node of the tries, which requires $\Ooh(k d \log^k d)$ space in total. 

We now describe the query algorithm. Recall that each dictionary look-up with~$k$ mismatches is a sequence of calls to the $\prefixsearch$ procedure, and therefore it suffices to give an efficient implementation of $\prefixsearch$. We first explain how to implement this operation if it starts at the root of some compact trie of the $k$-errata tree. Since we know the $k$-mismatch sketches of the prefixes of $Q$, we know their Karp--Rabin fingerprints. Hence, by Fact~\ref{fct:KR_concat} we can retrieve the Karp--Rabin fingerprint of any substring of~$Q$ in $\Oh(1)$ time. Theorem~\ref{th:zfast} immediately implies that a $\prefixsearch$ starting at the root of a compact trie can be implemented in $\Oh(\log n)$ time. Note that if the end of the $\prefixsearch$ is an implicit node, then the functionality of the $z$-fast tries will allow us retrieving only the edge this node belongs to, but not the node itself. As we show below, it is sufficient for our purposes. 

We now give an implementation of a $\prefixsearch$ starting at an arbitrary node of a compact trie by reducing it first to a $\prefixsearch$ that starts at an explicit node of the trie and then to a $\prefixsearch$ that starts at the root of the trie. We first show a reduction from a $\prefixsearch$ that starts at an implicit node to a $\prefixsearch$ that starts at an explicit node. As explained above, we might know the edge this starting node belongs to, but not the node itself. However, from the description of the query algorithm in Section~\ref{sec:reminderkerratatree} it follows that the algorithm will continue along the edge by running $\prefixsearch$ operations until it either runs out of the mismatch credit or reaches the explicit node at the lower end of the edge. We will fast-forward to the lower end of the edge using the $k$-mismatch sketches. Namely, let $Q'$ be the query string when we entered the current tree (note that we do not change the tree when retrieving patterns of group~\rom{3}). Importantly, the string $Q'$ is a suffix of $Q$. We want to check whether we can reach the lower end of the edge and not run out of the mismatch credit. In other words, we want to compare the number of mismatches between the label $S$ of the lower end of the edge and the prefix $S'$ of $Q'$ of length $|S|$, and the mismatch credit. We use the $k$-mismatch sketches for this task. We store the sketch of $S$, and the sketch of $S'$ can be computed in $\Ooh(k)$ time as it is a substring of~$Q$. Having computed the sketches, we can compute the Hamming distance between $S$ and $S'$ using Lemma~\ref{lm:compute_dist}. If the Hamming distance is larger than the available mismatch credit, we stop, otherwise, we continue the $\prefixsearch$ from the explicit node at the lower end of the edge. 

Finally, we show an implementation of a $\prefixsearch$ for a string $Q'$ that starts at an explicit node $u$ of a trie. Let $S$ be the label of $u$. Our task is equivalent to performing a $\prefixsearch$ starting from the root of a trie for a string $S \circ Q'$. Recall that Theorem~\ref{th:zfast} assumes that we can extract the Karp--Rabin fingerprint of any prefix of $S \circ Q'$. We do not know the Karp--Rabin fingerprints of the prefixes of $S \circ Q'$, but we can compute them as follows. First, we use the $k$-mismatch sketches similar to above to compute the at most $k$ mismatches that occurred on the way from the root of the trie to~$u$. After having computed the mismatches, we can compute any of the fingerprints in $\Ooh(k)$ time by taking the fingerprint of the corresponding substring of $Q$ and ``fixing'' it in at most~$k$ positions. By fixing we mean that if two strings $X,Y$ differ in positions $i_1, i_2, \ldots, i_{k'}$, where $k' \le k$, then $\varphi(X)$ equals $\varphi(Y) + \sum_{j = 1}^{k'} (X[i_j]-Y[i_j]) r^{i_j-1} \bmod p$, where $r,p$ are as in Definition~\ref{def:kr}, and therefore can be computed in $\Oh(k)$ time. The value $\varphi(X^R)$ can be computed analogously from $\varphi(Y^R)$, which implies that the Karp--Rabin fingerprint of $X$ can be computed from that of $Y$ in $\Oh(k)$ time.

The bounds on the space occupied by the data structure and the query time follow from Lemmas~\ref{lm:kerratatreesize} and~\ref{lm:kerratatreetime}. The bound on the error probability follows from the assumption $\log^k d < n$,  Lemma~\ref{lm:compute_dist}, and Theorem~\ref{th:zfast}.
\end{proof}

\section{Algorithm}
In this section, we show our streaming algorithm for dictionary matching with $k$ mismatches. Similar to previous work on streaming pattern matching, we assume that we receive the patterns first, preprocess them (without accounting for the preprocessing time), and then receive the text.

\begin{theorem}\label{th:main-deamortised}
Assume an alphabet of size $n^{\Oh(1)}$. There is a randomised streaming algorithm that solves dictionary matching with $k$ mismatches in $\Ooh(k d \log^k d)$ space and $\Ooh(k \log^k d + |\occ|)$ time per arriving letter. The algorithm has two-sided error and its answers are correct with high probability.
\end{theorem}

Hereafter we assume $k \log \log d < \log^k d$ (all logs are in base two), which is true for any~$d \ge 3$ and $k \ge 1$. For $d = 1,2$ we can use Corollary~\ref{cor:streaming-k-mismatch} to achieve the complexities of Theorem~\ref{th:main-deamortised}. We also assume that $\log^k d < n$, as otherwise we can use the deterministic $k$-errata tree (Section~\ref{sec:reminderkerratatree}): namely, during the preprocessing step the algorithm builds the $k$-errata tree for the reverses of the patterns. During the main stage, at every position of the text the algorithm runs a dictionary look-up with $k$ mismatches for the suffix of the text of length $n$. The algorithm uses $\Oh(nd + d \log^k d) = \Oh(d \log^k d)$ space and $\Oh(n + \log^k d \log \log n + |\occ|) = \Ooh(k \log^k d + |\occ|)$ time per letter.

\subsection{Outline of the algorithm}
The main idea is to consider periodic and non-periodic patterns separately. 

\begin{definition}[$k$-period, Clifford et al.~\cite{k-mismatch}]
The $k$-period of a string $S = S[1] \ldots S[m]$ is the minimal integer $\pi > 0$ such that the Hamming distance between $S[\pi+1,m]$ and $S[1,m-\pi]$ is at most $2k$.
\end{definition}

\begin{observation}\label{obs:rare}
If the $k$-period of $S$ is larger than $d$, there can be at most one $k$-mismatch occurrence of $S$ per $d$ consecutive positions of the text.
\end{observation}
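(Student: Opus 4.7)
The plan is a proof by contradiction using the triangle inequality for Hamming distance. Suppose that $S$ has two $k$-mismatch occurrences in the text $T$ at positions $i < j$ with $j - i \le d$, and set $\pi = j - i$, so that $1 \le \pi \le d$. I would aim to show that $\pi$ itself satisfies the defining condition of the $k$-period, which would force the $k$-period of $S$ to be at most $\pi \le d$, contradicting the hypothesis that the $k$-period of $S$ is strictly larger than $d$.

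The central observation is that the two occurrence windows $T[i, i+m-1]$ and $T[j, j+m-1]$ overlap exactly in the substring $W = T[i+\pi, i+m-1] = T[j, j+m-\pi-1]$. Under the occurrence at position $i$, the substring $W$ is aligned against $S[\pi+1, m]$; under the occurrence at position $j$, the same $W$ is aligned against $S[1, m-\pi]$. Since each full alignment contributes at most $k$ mismatches against $S$ in total, in particular the Hamming distance between $W$ and $S[\pi+1, m]$ is at most $k$, and the Hamming distance between $W$ and $S[1, m-\pi]$ is at most $k$ as well.

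Applying the triangle inequality for Hamming distance (restricted to strings of equal length $m - \pi$), I would then conclude that the Hamming distance between $S[\pi+1, m]$ and $S[1, m-\pi]$ is at most $2k$, which by definition means that the $k$-period of $S$ is at most $\pi$. This contradicts the assumption that the $k$-period exceeds $d$. There is no genuine combinatorial obstacle here: the overlap structure of two close occurrences together with the triangle inequality does all the work, and the only place needing mild care is the index bookkeeping that matches the overlap $W$ to the correct substrings $S[\pi+1, m]$ and $S[1, m-\pi]$ of $S$.
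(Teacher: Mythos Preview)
Your argument is correct and is exactly the intended one: two $k$-mismatch occurrences at distance $\pi\le d$ share an overlap that, via the triangle inequality, forces the Hamming distance between $S[\pi+1,m]$ and $S[1,m-\pi]$ to be at most $2k$, contradicting that the $k$-period exceeds $d$. The paper states this as an observation without giving a proof, so there is nothing further to compare.
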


For patterns of length smaller or equal to $3d$, we use the algorithm based on the randomised $k$-errata tree (Lemma~\ref{lm:streaming-k-errata}). The remaining patterns form two smaller dictionaries: the first dictionary $\mathcal{D}_1$ contains the patterns $P_i$ such that the $k$-period of their suffix $\tau_i = P_i[|P_i|-2d+1,|P_i|]$ is larger than $d$, and the second dictionary $\mathcal{D}_2$ contains patterns $P_i$ such that the $k$-period of their suffix $\tau_i$ is at most~$d$. The algorithm processes the dictionaries in parallel. To process $\mathcal{D}_1$, it makes use of Observation~\ref{obs:rare}. To process $\mathcal{D}_2$, it uses the fact that the patterns and therefore the regions of the text containing their $k$-mismatches occurrences are periodic and can be encoded in small space.

The idea of exploiting periodicity has flavour similar to~\cite{streamdictionary,GP17,k-mismatch,Clifford:17}, but we make a significant step forward to allow \emph{both} mismatches and multiple patterns. 

The rest of the section is organised as follows. First, we show two algorithms that we will use as auxiliary routines (Section~\ref{sec:tools}). Second, we show an algorithm for the dictionary $\mathcal{D}_1$ (Section~\ref{sec:large}) and an algorithm for the dictionary $\mathcal{D}_2$ (Section~\ref{sec:small}). These two algorithms run in parallel give a streaming algorithm with the complexities of Theorem~\ref{th:main-deamortised}, however, the time complexity is amortised. In Section~\ref{sec:deamortise} we show how to de-amortise the algorithm which yields Theorem~\ref{th:main-deamortised}.

\subsection{Tools}\label{sec:tools}
We now describe two algorithms. The first algorithm is a streaming algorithm that is based on the randomised $k$-errata tree. The space complexity of this algorithm depends on the maximal length of the patterns linearly, so we cannot use it in the general case, but it will be handy when processing short patterns. The second algorithm is a streaming algorithm that can check whether the current stream ends with a $k$-mismatch occurrence of one of the dictionary patterns on demand. 

\subsubsection{Tool: Algorithm based on the randomised $k$-errata tree}\label{sec:algkerratatree}
\begin{lemma}\label{lm:streaming-k-errata}
Assume $\log^k d < n$. Assuming that all patterns are of length at most $\ell \le n$, there is a streaming algorithm for dictionary matching with $k$ mismatches that uses $\Ooh(k \cdot (\ell+d \log^k d))$ space and $\Ooh(k \log^k d + |\occ|)$ time per letter. The algorithm is randomised with two-sided error, and its answers are correct with high probability.
\end{lemma}
\begin{proof}
During the preprocessing step, the algorithm builds the $k$-errata tree for the reverses of the patterns. During the main step, the algorithm maintains the $k$-mismatch sketches of the reverses of the $\ell$ longest prefixes of the text in the round-robin fashion updating them in $\Ooh(k)$ time when a new letter arrives (Corollary~\ref{cor:sketch_concat}). If the text ends with a $k$-mismatch occurrence of some pattern $P_i$, there is a suffix of the text of length $|P_i| \le \ell$ such that the Hamming distance between it and some pattern in the dictionary is bounded by $k$. It means that we can retrieve all occurrences of such patterns by using the randomised $k$-errata tree for the reverse of the $\ell$-length suffix of the text. We can retrieve the $k$-mismatch sketch of the reverse of any substring of this suffix in $\Ooh(k)$ time (Corollary~\ref{cor:sketch_concat}), and therefore perform the dictionary look-up query in $\Ooh(k \log^k d + |\occ|)$ time. In total, the algorithm uses $\Ooh(k \cdot (\ell + d \log^{k} d))$ space and $\Ooh(k \log^k d + |\occ|)$ time per letter.  \end{proof}

\subsubsection{Tool: On-demand algorithm}\label{sec:ondemand} 
Clifford et al.~\cite{k-mismatch} showed that the problem of detecting $k$-mismatch occurrences of a pattern $P$ in the text can be reduced to dictionary matching in the following way. Let $R$ be a set of $\log n$ primes chosen u.a.r. from the interval $[k \log^2 n, 102 k \log^2 n]$. A subpattern $P_{r}^\ell$ of the pattern $P$ is defined by a prime $r \in R$ and an integer $1 \le \ell \le r$, namely, $P_{r}^\ell = P[\ell] P[r+\ell] P[2r+\ell] \dots$ and so on until the end of $P$. (If $P$ is shorter than $r$, some subpatterns are left undefined.) Furthermore, let $Q$ be the set of all the primes in the interval $[\log n, 3 \log n]$, and consider a second-level partitioning of the patterns. Namely, a subpattern $P_{r,q}^\ell$ for $r \in R, q \in Q$ and $1 \le \ell \le q \cdot r$ is defined as $P_{r,q}^\ell = P[\ell] P[r\cdot q+\ell] P[2r\cdot q+\ell] \dots$ and so until the end of $P$. (Again, some subpatterns can be undefined.)

\begin{corollary}[of Lemmas 3.2 and 5.2~\cite{k-mismatch}]\label{lm:reduction_dm}
Consider an alignment of a pattern $P$ of length at most $n$ and the text. Given the subset of the subpatterns of $P$ that match exactly at this alignment, there is a randomised $\Ooh(k)$-time algorithm that outputs a message ``The distance is larger than~$k$.'' if the Hamming distance between $P$ and the text is larger than $k$, and the true value of the Hamming distance otherwise. The algorithm is correct with probability at least $1-2/n^2$.
\end{corollary}
\begin{proof}
First, for every $r$, the algorithm finds the number $\mu_r$ of subpatterns $P_{r}^\ell$ that do not match. As $\max_{r \in R} \mu_r$ is a lower bound for the Hamming distance between $P$ and the text, in the case $\max_{r \in R} \mu_r > k$  the algorithm can simply output the message ``The distance is larger than~$k$.''
Otherwise, if $\max_{r \in R} \mu_r \le k$, the algorithm computes the Hamming distance as described in~\cite[Lemma 3.2]{k-mismatch}. This requires $\Ooh(k)$ time and gives the correct value of the Hamming distance with probability at least $1-1/n^2$ conditioned on the fact that the distance is bounded by $2k$. By~\cite[Lemma 5.2]{k-mismatch} this condition is violated with probability at most $1/4n^2$. Therefore, the answer is correct with probability at least $1-2/n^2$.  
\end{proof}

We also exploit the following result:

\begin{theorem}[{\cite[Theorem 2]{GP17}}]\label{th:streaming_dm}
Given a dictionary of $d$ patterns $P_1, P_2, \ldots, P_d$ of lengths at most $n$ and a text $T$ of length $\Oh(n)$ over an alphabet of size $n^{\Oh(1)}$. There exists a randomised streaming algorithm for the dictionary
matching problem that uses $\Oh(d \log n)$ space and $\Ooh(1)$ time per letter of the text. At each time moment, the algorithm outputs the longest occurrence of a dictionary pattern that is a suffix of the current text. The algorithm has one-sided error and its answers are correct with high probability.
\end{theorem}

\begin{corollary}\label{cor:ondemand}
Given a dictionary of $d$ patterns $P_1, P_2, \ldots, P_d$ of lengths at most $n$ and a text $T$ of length $\Oh(n)$. 
There is a randomised streaming algorithm that uses $\Ooh(k^2 d)$ space and processes each letter of the text in~$\Ooh(1)$ time. On demand, the algorithm can tell in~$\Ooh(k)$ time if there is a $k$-mismatch occurrence of a pattern $P_i$ that ends at the current position of text. The algorithm has two-sided error and its answers are correct with high probability. 
\end{corollary}
\begin{proof}
During the preprocessing step, we build two compact tries. The first trie contains the reverses of all subpatterns $(P_i)^\ell_{r}$, where $r \in R, 1 \le \ell \le r$, and the second one the reverses of all subpatterns $(P_i)^\ell_{r,q}$, where $r \in R, q \in Q, 1 \le \ell \le q \cdot r$. Furthermore, we traverse the tries in the depth-first order and memorise, for each node, the first and the last time when we see it. This allows us to tell in $\Oh(1)$ time if the reverse of one subpattern is a prefix of the reverse of another subpattern. (Recall that the preprocessing time is not accounted for in the time complexity.)

During the main stage, we partition the text stream $T$ into substreams. For each $r \in R$ and $1 \le \ell \le r$, we define the text substream $T_r^\ell = T[\ell]T[ r+\ell]T[2 r+\ell]\dots$ and so on until the end of $T$. On the substream $T_{r}^\ell$ we run the algorithm of Theorem~\ref{th:streaming_dm} for the dictionary of subpatterns $(P_i)^{\ell'}_{r}$, where $i = \{1, 2, \ldots, d\}$ and $1 \le \ell' \le r$.
Also, for each pair of primes $q \in Q, r \in R$ and an integer $1 \le \ell \le q \cdot r$ we define a text substream $T_{q, r}^\ell = T[\ell]T[q \cdot r+\ell]T[2q \cdot r+\ell]\dots$ and so on until the end of $T$. On the substream $T_{q,r}^\ell$ we run the algorithm of Theorem~\ref{th:streaming_dm} for the dictionary of subpatterns $(P_i)^{\ell'}_{q,r}$, where $i = \{1, 2, \ldots, d\}$ and $1 \le \ell' \le q \cdot r$.

In total there are $\Ooh(k)$ substreams and $\Ooh(k d)$ subpatterns per substream, and therefore the algorithm uses $\Ooh(k^2 d)$ space. To process each letter of the text, the algorithm requires $\Ooh(1)$ time: when a new letter $T[p]$ arrives, the algorithm must update one substream $T_r^\ell$ for each $r \in R$ ($\ell = p \bmod r$) and one substream $T_{r,q}^\ell$ for each pair $r \in R, q \in Q$ ($\ell = p \bmod (r \cdot q)$). 

Using the output of the dictionary matching algorithms and the compact tries built at the preprocessing step, we can check, for any subpattern, if it matches at the current alignment in $\Oh(1)$ time and therefore can decide if there is a $k$-mismatch occurrence of $P_i$ in $\Ooh(k)$ time by Lemma~\ref{lm:reduction_dm}.
\end{proof}

\subsection{Streaming algorithm for patterns with large periods}\label{sec:large}
In this section, we show a streaming algorithm for the dictionary $\mathcal{D}_1$ that contains patterns~$P_i$ such that the $k$-period of their suffix $\tau_i = P_i[|P_i|-2d+1,|P_i|]$ is at least $d$. 

\begin{lemma}\label{lm:largeperiods}
Assume $\log^k d < n$. There is a randomised streaming algorithm that retrieves all $k$-mismatch occurrences of the patterns from $\mathcal{D}_1$ in the text. The algorithm uses $\Ooh(k d \log^k d)$ space and $\Ooh(k \log^{k} d)$ amortised time per letter. The algorithm has two-sided error and its answers are correct with high  probability.
\end{lemma}
\begin{proof}
Note that any $k$-mismatch occurrence of a pattern $P_i$ ends with a $k$-mismatch occurrence of $\tau_i$. We retrieve the occurrences of $\tau_i$ via the algorithm based on the $k$-errata tree (Lemma~\ref{lm:streaming-k-errata}). At each position of the text, the algorithm outputs all indices~$i$ such that there is a $k$-mismatch occurrence of $\tau_i$ ending at this position. If the same index is output less than $d$ positions apart, i.e. there are two $k$-mismatch occurrences of $\tau_i$ that end at positions $p_1$ and $p_2$ such that $|p_1-p_2| \le d$, there is an error and we stop the execution of the algorithm. When we find a $k$-mismatch occurrence of $\tau_i$, our second step is to check if it can be extended into a $k$-mismatch occurrence of $P_i$ which we do with the help of the on-demand algorithm (Corollary~\ref{cor:ondemand}).  

We now analyse the algorithm. Assume that neither the $k$-errata tree nor the on-demand algorithm err. The $k$-errata tree for the suffixes~$\tau_i$ occupies $\Ooh(k d \log^k d)$ space. To report $d' \le d$ $k$-mismatch occurrences of $\tau_i$'s that end at the current position of the text, we  spend $\Ooh(k \log^k d + d')$ time  (Lemma~\ref{lm:streaming-k-errata}). The on-demand algorithm uses $\Ooh(k^2 d)$ space and $\Ooh(1)$ time per letter (Corollary~\ref{cor:ondemand}). To test if a $k$-mismatch occurrence of $\tau_i$ can be extended into a $k$-mismatch occurrence of $P_i$, we need $\Ooh(k)$ time. Since by Observation~\ref{obs:rare}, unless the $k$-errata tree algorithm errs, there is at most one $k$-mismatch occurrence of~$\tau_i$ per $d$ positions of the text, the time bound follows.

Lemma~\ref{lm:streaming-k-errata}, Corollary~\ref{cor:ondemand}, and the union bound imply that the answers of the algorithm are correct with high probability (assuming that the error probabilities in Lemma~\ref{lm:streaming-k-errata} and Corollary~\ref{cor:ondemand} are chosen to be small enough).
\end{proof}

\subsection{Streaming algorithm for patterns with small periods}\label{sec:small}
In this section, we show a streaming algorithm for the second dictionary $\mathcal{D}_2$ that contains patterns $P_i$ such that the $k$-period of  their suffix $\tau_i = P_i[|P_i|-2d+1,|P_i|]$ is at most $d$.

\begin{lemma}\label{lm:smallperiods}
Assume $\log^k d < n$. There is a randomised streaming algorithm that retrieves all $k$-mismatch occurrences of the patterns from $\mathcal{D}_2$ in the text. The algorithm uses $\Ooh(k d \log^k d)$ space and $\Ooh(k \log^{k} d + |\occ|)$ amortised time per letter. The algorithm has two-sided error and its answers are correct with high probability.
\end{lemma}

We define $\tau'_i$, $|\tau_i'| \ge |\tau_i|$, to be the longest suffix of $P_i$ with the $k$-period at most $d$. Two cases are possible: 

\begin{enumerate}
\item\label{it:i} The suffix $\tau'_i$ equals $P_i$ (in other words, the $k$-period of $P_i$ is at most $d$).
\item\label{it:ii} The suffix $\tau'_i$ does not equal $P_i$. 
\end{enumerate}

\subsubsection{Algorithm for Case~\ref{it:i}}
We first assume that Case~\ref{it:i} holds for all the patterns, and then  extend the algorithm to Case~\ref{it:ii} as well. We start by showing a simple but important property of patterns with small periods.

\begin{figure}
\begin{center}
\begin{tikzpicture}[scale=0.5]
\node at (-0.5,0) {\ldots};
\draw (0,0) rectangle (18,1);
\foreach \x in {1,2,3,4} {
	\draw (\x*4,0)--(\x*4,1);
}
\draw[dashed] (5,0)--(5,1);
\draw (5,1)--(5,2);
\draw (16,1)--(16,2);
\draw[<->] (5,1.75)--(16,1.75) node[pos=0.5, above] {$n$ positions};

\draw[blue] (10,-0.1) rectangle (15.9,1.1) node[pos=0.5] {$L$};

\node[below] at (17.8,0) {$r$};
\node[below] at (5,0) {$j \cdot d -n+1$};
\node[below] at (16,0) {$j \cdot d$};

\draw [decorate,decoration={brace,amplitude=10pt,mirror}]
(10,-1) -- (18,-1) node [black,midway,yshift=-0.6cm] {$L \circ T[j\cdot d+1, r]$};
\end{tikzpicture}
\end{center}
\caption{If there is a $k$-mismatch occurrence of a pattern $P_i$ that ends at a position $r$ of $T$, and the $k$-period of $P_i$ is at most $d$, then the occurrence is contained in the suffix $L \circ T[j\cdot d+1, r]$ of the text.}
\label{fig:periodic_suffix}
\end{figure}

\begin{lemma}\label{lm:region-periods}
Consider a position $r$ of the text $T$. Let $j \cdot d$ be the largest multiple of~$d$ that is smaller than $r$ and $L$ be the longest suffix of $T[j\cdot d-n+1, j \cdot d]$ with the $2k$-period at most $d$. Every $k$-mismatch occurrence of $P_i \in \mathcal{D}_2$ in $T$ that ends at the position $r$ is fully contained in $L \circ T[j\cdot d+1, r]$ (see Fig.~\ref{fig:periodic_suffix}). 
\end{lemma}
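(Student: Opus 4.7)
My plan is to derive the inclusion by transferring the $k$-period of the pattern to a $2k$-period of the corresponding text window via the triangle inequality, and then to invoke the \emph{maximality} built into the definition of $L$. I work under the current Case~\ref{it:i} assumption that $P_i$ itself has $k$-period $\pi \le d$. Let $s = r - |P_i| + 1$ be the starting position of the putative $k$-mismatch occurrence. If $s > j \cdot d$, the occurrence already sits inside $T[j\cdot d + 1,\, r]$ and there is nothing to prove, so assume $s \le j \cdot d$. Using $|P_i| \ge 3d$ and $r \le j\cdot d + d$ (by the choice of $j$), a quick numerical check shows that $\ell := j\cdot d - s + 1 \ge |P_i| - d \ge 2d \ge 2\pi$, which makes the subsequent shift-by-$\pi$ inequalities well-posed.

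The crux is to show that $T[s, j\cdot d]$ itself has $2k$-period at most $d$, i.e.\ $d_H(T[s, j\cdot d - \pi],\, T[s+\pi, j\cdot d]) \le 4k$. I would obtain this by chaining three distances via the triangle inequality. First, the $k$-period hypothesis on $P_i$, restricted to the first $\ell$ positions (legal by $\ell \le |P_i|$), bounds $d_H(P_i[1, \ell-\pi],\, P_i[\pi+1, \ell])$ by $2k$. Second, restricting the global $\le k$ mismatches of the alignment $P_i$ vs $T[s, r]$ to the length-$\ell$ prefix bounds each of $d_H(P_i[1, \ell-\pi],\, T[s, j\cdot d - \pi])$ and $d_H(P_i[\pi+1, \ell],\, T[s+\pi, j\cdot d])$ by $k$, as both are sub-counts of the same $\le k$. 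Summing the three bounds yields the required $4k$.

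To finish, I will observe that $s \ge r - m + 1 \ge j\cdot d - m + 2$, so $T[s, j\cdot d]$ is a suffix of $T[j\cdot d - m + 1,\, j\cdot d]$ whose $2k$-period is at most $d$. The definition of $L$ as the \emph{longest} such suffix then forces $j\cdot d - s + 1 \le |L|$, i.e.\ $s \ge j\cdot d - |L| + 1$, and therefore the occurrence $T[s, r]$ is contained in $L \cdot T[j\cdot d + 1,\, r]$ as claimed.

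I expect the only delicate step to be the triangle-inequality bookkeeping: one must verify $\ell \ge 2\pi$ so that restricting the global periodicity inequality to the first $\ell$ characters is meaningful, and one must notice that the factor-two gap between the pattern's ``$k$-period'' and the text's ``$2k$-period'' is exactly the slack absorbed by the two mismatch-$\le k$ segments of the chain. The remaining steps are direct manipulations of substring indices.
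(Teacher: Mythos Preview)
Your argument is correct and follows the same line as the paper's proof: transfer the pattern's $k$-period $\pi\le d$ to a $2k$-period of the text window via the triangle inequality (picking up $k+2k+k=4k$), then invoke the maximality of $L$. The only cosmetic differences are that the paper first bounds the $2k$-period of the whole occurrence $T[s,r]$ and then restricts to the prefix $T[s,j\cdot d]$, whereas you go to $T[s,j\cdot d]$ directly, and you add the explicit check $\ell\ge 2\pi$ that the paper omits.
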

\begin{proof}
Consider an occurrence $T[\ell,r]$ of a pattern $P_i \in \mathcal{D}_2$ that ends at the position~$r$. Since the length of $P_i$ is at most $n$, $\ell \ge r-n+1 > j\cdot  d-n+1$. Now, let $\rho \le d$ be the $k$-period of $P_i$ and $\Ham(X,Y)$ the Hamming distance between strings $X,Y$. We have $\Ham(T[\ell+\rho-1,r], P_i[\rho,|P_i|]) \le k$ and $\Ham(T[\ell,r-\rho+1], P_i[1,|P_i|-\rho+1]) \le k$. Therefore, by the triangle inequality,  
$$\Ham(T[\ell+\rho-1,r], T[\ell,r-\rho+1]) \le 2k + \Ham(P_i[\rho,|P_i|], P_i[1,|P_i|-\rho+1])$$
As $\Ham(P_i[\rho,|P_i|], P_i[1,|P_i|-\rho+1]) \le 2k$, we obtain that $\Ham(T[\ell+\rho-1,r], T[\ell,r-\rho+1]) \le 4k$ and hence the $2k$-period of $T[\ell,j\cdot d]$ is at most $\rho \le d$. Consequently, it is contained in~$L$. The claim follows.
\end{proof}

In other words, it suffices to know $L$ and the at most $d$ last positions of the text to be able to retrieve the $k$-mismatch occurrences of the patterns that end at a position $r$. Of course, in general $L$ can be long and we cannot store it explicitly. However, as it is close to periodic, we will be able to encode it in small space.

\begin{lemma}\label{lm:encoding}
Consider a string $L$ such that its  $2k$-period $\rho$ is at most $d$, and recall that~$L^R$ stands for the reverse of $L$. There is an $\Ooh(k^2 d)$-space encoding of $L$ that allows to retrieve the $4k$-mismatch sketch of any substring of $L^R$, given by its endpoints, in $\Ooh(k)$ time.
\end{lemma} 
\begin{proof}
Consider a partitioning of $L^R$ into non-overlapping blocks of length $\rho$ (the last block may be shorter). We say that a block contains a mismatch if, for some $i$, its $i$-th letter is different from the $i$-th letter of the preceding block. For convenience, we also say that the first and the last blocks of $L^R$ are mismatch-containing. Note that the number of blocks containing a mismatch is $\Oh(k)$ (this is because $\Ham(L^R[1,|L|-\rho+1],L^R[\rho+1,|L|]) = \Ham(L[1,|L|-\rho+1],L[\rho+1,|L|]) \le 4k$, and it upper bounds the number of the blocks containing a mismatch).

We encode $L^R$ as a sorted array of the starting positions of all blocks containing a mismatch. For each mismatch-containing block, we store the $4k$-mismatch sketch of each of its suffixes, as well as the sketch of the suffix of $L^R$ that starts right before the block. In total, the encoding occupies $\Ooh(k^2 d)$ space. 

By Corollary~\ref{cor:sketch_concat}, it suffices to show that we can compute the  $4k$-mismatch sketch of any suffix of $L^R$ in $\Ooh(k)$ time. We retrieve the sketch in the following way. Let $\ell$ be the starting position of the suffix, and note that each mismatch-containing block starts a streak of equal blocks of $L^R$. First, we find the streak of blocks $\ell$ belongs to, and retrieve the sketch of the suffix of $L^R$ starting just after the streak in $\Ooh(k)$ time. The remaining part consists of a number of repetitions of the block containing the position $\ell$ prepended with the suffix of the block (see Fig.~\ref{fig:encoding}). We can compute the sketch of the block and of its suffix in $\Ooh(k)$ time, and therefore we can compute the sketch of the remaining part in $\Ooh(k)$ time using Corollary~\ref{cor:sketch_concat}. 
\end{proof}

\begin{figure}
\begin{center}
\begin{tikzpicture}[scale=0.25]
\draw (0,0) rectangle (34,1);

\draw[red,pattern=north west lines, pattern color=red] (0.2,0.2) rectangle (3.8,0.8);
\draw[red,pattern=north west lines, pattern color=red] (4.2,0.2) rectangle (7.8,0.8);
\draw[green,pattern=north west lines, pattern color=green] (8.2,0.2) rectangle (11.8,0.8);
\draw[green,pattern=north west lines, pattern color=green] (12.2,0.2) rectangle (15.8,0.8);
\draw[green,pattern=north west lines, pattern color=green] (16.2,0.2) rectangle (19.8,0.8);
\draw[blue,pattern=north west lines, pattern color=blue] (20.2,0.2) rectangle (23.8,0.8);
\draw[blue,pattern=north west lines, pattern color=blue] (24.2,0.2) rectangle (27.8,0.8);
\draw[blue,pattern=north west lines, pattern color=blue] (28.2,0.2) rectangle (31.8,0.8);
\draw[black,pattern=north west lines, pattern color=black] (32.2,0.2) rectangle (33.8,0.8);

\foreach \x in {1,2,3,4,5,6,7,8} {
	\draw (\x*4,0)--(\x*4,1);
}

\node[below] at (10,0) {$\ell$};
\draw[dashed] (10,0)--(10,1);
\draw (10,1)--(10,2);
\draw (34,1)--(34,2);
\draw[<->] (10,1.75)--(34,1.75) node[pos=0.5, above] {a suffix of $L^R$};

\draw [decorate,decoration={brace,amplitude=10pt,mirror}]
(20,-1) -- (34,-1) node [black,midway,yshift=-0.6cm] {sketch of this suffix is known};
\end{tikzpicture}
\end{center}
\caption{The string $L^R$ and retrieval of the $4k$-mismatch sketch of a suffix of $L^R$. There are four streaks of equal blocks shown in different colours, each streak starts with a mismatch-containing block.}
\label{fig:encoding}
\end{figure}

We are now ready to explain the algorithm. 
During the preprocessing stage, we build the randomised $k$-errata tree for the reverses of all the patterns. During the main stage of the algorithm, we maintain the suffix $L$ and its encoding.
We initialize $L$ with an empty string and update it each $d$ letters. While reading the next $d$ letters of the text, that is a substring $T[(j-1) \cdot d+1, j \cdot d]$, we compute the $4k$-mismatch sketches of the reverses of its $d$ prefixes in $\Ooh(kd)$ time (Corollary~\ref{cor:sketch_concat}). After having reached $T[j \cdot d]$, we update~$L$, the longest suffix of $T[j\cdot d-m+1, j\cdot d]$ with the $2k$-period at most $d$, and its encoding:

\begin{lemma}
$L$ and its encoding can be updated in $\Ooh(k^2)$ amortised time per letter. 
\end{lemma}
\begin{proof}
$L$ is determined by its endpoints in $T$. To update $L$, it suffices to find the longest suffix of $T[j \cdot d-n+1, j \cdot d]$ such that the Hamming distance between it and its copy shifted by $\rho$ positions, for $\rho = 1,\dots,d$, is at most $2k$. For a fixed value of $\rho$, we use binary search and the $4k$-mismatch sketches. 

Suppose we want to decide whether the Hamming distance between $T[\ell, j \cdot d-\rho+1]$ and $T[\ell+\rho, j \cdot d]$ is at most $4k$. Note that if $T[\ell, j \cdot d]$ is longer than $L \circ T[(j-1) \cdot d+1,j \cdot d]$, then its $2k$-period is larger than $d$. This is because if the $2k$-period of $T[\ell, j \cdot d]$ is at most $d$, then the $2k$-period of $T[\ell, (j-1) \cdot d]$ is at most~$d$, and hence $T[\ell, (j-1) \cdot d]$ must be shorter than $L$. In other words, we must only consider the case when $T[\ell, j \cdot d]$ is fully contained in $L \circ T[(j-1)\cdot d+1, j\cdot d]$. 

In this case, both $T[\ell, j\cdot d-\rho+1]$ and $T[\ell+\rho, j\cdot d]$ can be represented as a concatenation of a suffix of $L$ and a substring of $T[(j-1) \cdot d+1, j \cdot d]$. We can retrieve the $4k$-mismatch of the reverse of any suffix of $L$ and the $4k$-mismatch sketch of the reverse of any substring of $T[(j-1)\cdot d+1, j \cdot d]$ in $\Ooh(k)$ time using  Lemma~\ref{lm:encoding} and Corollary~\ref{cor:sketch_concat}. Therefore, we can compute the $4k$-mismatch sketches of  $(T[\ell, j\cdot d-\rho+1])^R$ and $(T[\ell+\rho, j\cdot d])^R$ and hence $\Ham((T[\ell, j\cdot d-\rho+1])^R, (T[\ell+\rho, j\cdot d])^R) = \Ham(T[\ell, j\cdot d-\rho+1],T[\ell+\rho, j\cdot d])$ in $\Ooh(k)$ time using Lemma~\ref{lm:compute_dist}. In total, we need $\Ooh(d k)$ time to update~$L$, or $\Ooh(k)$ amortised time per letter. 

We can now update the encoding of $L$ in $\Ooh(k^2 d)$ time: Using the $4k$-mismatch sketches for $L^R[\rho,|L|]$ and $L^R[1,|L|-\rho+1]$, we can find the $\Oh(k)$ blocks containing a mismatch in $\Ooh(k)$ time. We can then re-build the sorted array of the starting positions of mismatch-containing blocks in $\Ooh(k)$ time and compute the sketches for them in $\Ooh(k^2 d)$ time,  or~$\Ooh(k^2)$ amortised time per letter.
\end{proof}

Let $T[r]$ be the latest arrived letter of the text. To retrieve the $k$-mismatch occurrences that end at the position $r$, we use the $k$-errata tree for the reverses of the patterns in~$\mathcal{D}_2$ that we built during the preprocessing stage. Let $j \cdot d$ be the largest multiple of~$d$ that is at most $r$ and let $L$ be defined as above. By Lemma~\ref{lm:region-periods}, any $k$-mismatch occurrence of pattern $P_i \in \mathcal{D}_2$ that ends at~$r$ must be either equal to a suffix of $T[j\cdot d+1,r]$, or the concatenation of some suffix of $L$ and $T[j \cdot d+1,r]$. The encoding of $L$ allows to compute the $4k$-mismatch sketch (and therefore the $k$-mismatch sketch) of the reverse of any suffix of $L$ in $\Ooh(k)$ time. We can also compute the $4k$-mismatch sketch of the reverse of any of the $d$ latest suffixes of the text in $\Ooh(k)$ time. Therefore, we can retrieve the $k$-mismatch occurrences of the patterns for a current position in $\Ooh(k \log^{k} d + |\occ|)$ time using the $k$-errata tree. In total, the algorithm for Case~\ref{it:i} uses $\Ooh(k d \log^k d)$ space and $\Ooh(k \log^{k} d+ |\occ|)$ amortised time per letter. 

\subsubsection{Extension to Case~\ref{it:ii} and wrapping up}
Consider now Case~\ref{it:ii}. Note first that the $2k$-period of a string $P_i[|P_i|-|\tau'_i|, |P_i|]$, which is $\tau'_i$ extended by one letter, must be at least $d$, and therefore by Observation~\ref{obs:rare} there can be at most one $k$-mismatch occurrence of $P_i[|P_i|-|\tau'_i|, |P_i|]$ per $d$ positions of the text. We use the techniques of the algorithm for Case~\ref{it:i} to retrieve the $k$-mismatch occurrences of $P_i[|P_i|-|\tau'_i|, |P_i|]$, and then use the on-demand algorithm (Corollary~\ref{cor:ondemand}) to check which of the retrieved occurrences can be extended into full occurrences of the patterns $P_i$.

In more detail, consider a position $r$ of the text. As before, let $j \cdot d$ be the largest multiple of $d$ that is smaller than $r$ and $L$ be the longest suffix of $T[j\cdot d-m+1, j\cdot d]$ with the $2k$-period at most $d$. Let now $L'$ be the suffix $L$ extended by one letter to the left, i.e. $L' = T[j \cdot d - |L|, j \cdot d]$. By definition, the $(2k+1)$-period of $L'$ is at most $d$. Furthermore, similar to Lemma~\ref{lm:region-periods}, we can show that any $k$-mismatch occurrence of $P_i[|P_i|-|\pi'_i|,|P_i|]$ ending at the position $r$ must be fully contained in $L' \circ T[j \cdot d+1,r]$. 

Therefore, instead of the suffix $L$, we can maintain the encoding of the longest suffix $L''$ such that its $(2k+1)$-period is at most $d$ via an algorithm similar to that of the previous section in $\Ooh(k^2 d)$ space and $\Ooh(k^2)$ amortised time per letter. (Note that $L''$ contains $L'$.) Using the encoding, we can retrieve the $4k$-mismatch (and therefore $k$-mismatch) sketch of the reverse of any substring of $L'' \circ T[j\cdot d+1,r]$ in $\Ooh(k)$ time and hence we can find the $k$-mismatch occurrences of $P_i[|P_i|-|\tau'_i|, |P_i|]$ using the $k$-errata tree in $\Ooh(k \log^{k} d + |\occ|)$ time per letter. If the $k$-errata tree reports two occurrences of $P_i[|P_i|-|\tau'_i|, |P_i|]$ at distance less than $d$ from each other, there is an error and we stop. Otherwise, we check whether the reported occurrence can be extended into a $k$-mismatch occurrence of $P_i$ via the on-demand algorithm in $\Ooh(k)$ time.

In total, the algorithm for Case~\ref{it:ii} uses $\Ooh(k d \log^k d + k^2 d) = \Ooh(k d \log^k d)$ space and $\Ooh(k \log^{k} d+ |\occ|)$ amortised time per letter. Lemma~\ref{lm:smallperiods} follows.

\subsection{De-amortisation}\label{sec:deamortise}
Lemma~\ref{lm:largeperiods} and Lemma~\ref{lm:smallperiods} yield a streaming algorithm for the dictionary matching with~$k$ mismatches with the complexities of Theorem~\ref{th:main-deamortised}, except that the time is amortised. Below we explain how to de-amortise the algorithm. We use a standard approach called the \emph{tail trick}. 

\subsubsection{De-amortised algorithm with a delay}
First, note that there is an easy way to de-amortise the algorithm of Lemma~\ref{lm:largeperiods} if we allow a delay by $d$ letters.
Formally speaking, this means that an occurrence ending at position $i$ in the text does not need to be reported immediately after
reading the $i$-th letter, but instead can be reported after reading any of the subsequent $d$ letters (and we do not require any
control on when precisely does this happen).
To obtain such an algorithm, we divide the text into non-overlapping blocks of length $d$, and de-amortise the processing time of a block over the next block. We must memorize the $k$-mismatch occurrences of the suffixes $\tau_i$ that end at the last $2d$ positions of the text, but this requires only $\Oh(d)$ space and we can afford it. 

We now show how to de-amortise the algorithm for Case~\ref{it:i} of Lemma~\ref{lm:smallperiods}. This time, we do not need the delay. The only step of the algorithm that requires de-amortisation is updating $L$ and its encoding. We can de-amortise this step in a standard way. Namely, we de-amortise the time we need for an update over the next $d$ letters of text. We also maintain the sketches of the reverses of the $2d$ longest prefixes of the text in the round-robin fashion using $\Ooh(k d)$ space and $\Oh(k)$ time. If we need to extract the sketch of the reverse of some suffix of $L$ before the update is finished, we use the previous version of the data structure and the sketches of the reverses of the prefixes to compute the required values using Corollary~\ref{cor:sketch_concat}.  

Finally, we show how to de-amortise the algorithm of Case~\ref{it:ii} of Lemma~\ref{lm:smallperiods}, again with a delay of $d$ letters. Recall that this algorithm first finds the $k$-mismatch occurrences of the suffixes $P_i[|P_i|-|\tau'_i|, |P_i|]$ using an algorithm similar to the algorithm for Case~\ref{it:i} of Lemma~\ref{lm:smallperiods}, which can be de-amortised with no delay as explained above, and then tests these occurrences using the on-demand algorithm (Corollary~\ref{cor:ondemand}), which can be de-amortised with a delay of $d$ letters. Importantly, there are at most $d$ occurrences that need to be tested per $d$ letters, so we can memorize them until we can test them. The claim follows. 

\subsubsection{Removing the delay}
We now show how to remove the delay. Recall that we assume the patterns to have lengths larger than $3d$. We partition each pattern $P_i = H_i \circ Q_i$, where $Q_i$ is the suffix of~$P_i$ of length~$d$, and~$H_i$ is the remaining prefix. The idea is to find occurrences of the prefixes $H_i$ and of the suffixes $Q_i$ independently, and then to see which of them form an occurrence of $P_i$.

As above, we have three possible cases: the $k$-period of $H_i[|H_i|-2d+1, |H_i|]$ is larger than~$d$; the $k$-period of $H_i$ is at most $d$; the $k$-period of $H_i$ is larger than $d$ but the $k$-period of $H_i[|H_i|-2d+1, |H_i|]$ is at most $d$. 

In the second case, we do not need to change much. For the current position $r$ of the text we consider the largest $j \cdot d$ such that $r - j \cdot d \ge d$ and define $L$ to be the longest suffix of $T[j \cdot d - n+1, j \cdot d]$ such that its $2k$-period is at most $d$. We store the $k$-errata tree on the reverses of $P_i = H_i \circ Q_i$ and run the de-amortised algorithm described in the previous section that maintains the suffix $L$. Any $k$-mismatch occurrence of a pattern~$P_i$ is fully contained in $L \circ T[j\cdot d+1, r]$, and therefore we can find all such occurrences using the $k$-errata tree as above. 

We now explain how we remove the delay in the first and third cases. 
To find the occurrences of $Q_i$ we use the algorithm based on the $k$-errata tree (Lemma~\ref{lm:streaming-k-errata}). To find the occurrences of $H_i$, we use the de-amortised version of the algorithm of Lemma~\ref{lm:largeperiods} or of Lemma~\ref{lm:smallperiods}, as appropriately, that report the occurrences with a delay of at most $d$ letters. It means that at the time when we find an occurrence of $Q_i$, the corresponding occurrence of $H_i$ is already reported, so it is easy to check whether they form an occurrence of $P_i$. The only technicality is that we need to store the occurrences of $H_i$ that we found while processing the last $d$ letters of the text. 

To this end, we use a dynamic hashing scheme~\cite{Dietzfelbinger1992}. The scheme allows to store a dynamic dictionary in linear space and with high probability guarantees constant look-up and update times. The answers to the look-up queries are always correct. Note that we can modify the data structure slightly to have worst-case constant time per operation if we allow the answers to be correct only with high probability (which we can afford), namely, if an operation takes too much time, we can simply abandon~it.

We use the scheme for each of the last $d$ positions of the text. Namely, consider a position~$p$ of the text and suppose that we found a set of $k$-mismatch occurrences of the prefixes $H_i$ that end at $p$. Consider one of the prefixes, $H_i$, and let the Hamming distance between a prefix $H_i$ and the text be $h \le k$. Recall that by Lemma~\ref{lm:kerratatreetime} there are $\Oh(\log^k d)$ nodes of the $k$-errata tree labelled by $Q_i$. For each such node $u$ of the $k$-errata tree, we insert a pair $(u, h)$ into the dictionary. In case we insert a pair $(u, h)$ several times for different prefixes $H_i$'s, we associate $(u, h)$ with the set of such prefixes. Note that at any moment the total size of the dictionaries is $\Ooh(d \log^k d)$ as each of the patterns $H_i$ has at most one $k$-mismatch occurrence over each~$d$ consecutive positions of the text.

Suppose we are at a position $p$ of the text and we have run a dictionary look-up query and found the $\Oh(\log^k d)$ nodes in the tries of the $k$-errata tree corresponding to the suffixes $Q_i$ that occur at this position with at most $k$ mismatches. For each such node $u$ we know the Hamming distance $h'$ between the occurrences and the text. We then go to the dictionary at the position $(p-d)$ and look up pairs $(u,k-h'), (u,k-h'-1), \dots, (u,0)$. If they are in the dictionary, we report all $H_i$'s associated with these pairs. This step takes $\Oh(k \log^{k} d + |\occ|)$ time.

\section{Space lower bound}\label{sec:space_lower}
We now show a space lower bound that demonstrates that for constant $k$ our algorithm is optimal up to a polylogarithmic factor:

\begin{lemma}\label{lm:space_lower_bound}
Any streaming algorithm for dictionary matching with $k$ mismatches such that its answers are correct with constant probability requires $\Omega(k d)$ bits of space.
\end{lemma}
\begin{proof}
In the communication complexity setting the Index problem is stated as follows. We assume that there are two players, Alice and Bob. Alice holds a binary string of length $n$, and Bob holds an index $i$. In a one-round protocol, Alice sends Bob a single message (depending on her input and on her random coin flips) and Bob must compute 
 the $i$-th bit of Alice's input using her message and his random coin flips correctly with probability at least $2/3$. The length of Alice's message (in bits) is called the randomised one-way communication complexity of the problem. The randomised one-way communication complexity of the Index problem is $\Omega(n)$~\cite{Kremer:1995:ROC:225058.225277}.

Given a streaming algorithm for dictionary matching with $k$ mismatches, one can construct a randomised one-way communication complexity protocol for the Index problem as follows. As above, let $d$ be the size of the dictionary, and assume that~$n = kd$. Split Alice's string into~$d$ blocks of length $k$. Let $\#, \$, \$_1, \ldots, \$_d$ be distinct letters different from $\{0,1\}$.  For the $j$-th block $B_j$ create a string $P_j = (\$_j)^{k+1} \# B_j$. For Bob's input $i = k \cdot q + r$ we create a string $T$ which is equal to $(\$_q)^{k+1}$ concatenated with a string of length $k+1$ obtained from $\$^{k+1}$ by changing the $(r+1)$-th bit to $0$. A streaming dictionary matching with $k$ mismatches for the set of patterns $P_i$ and $T$  will output a $k$-mismatch occurrence of $B_q$ at the position $2k+2$ of the text iff the $r$-th bit of Alice's input is equal to $0$. Therefore, if Alice preprocesses $P_j$ as in the streaming algorithm and sends the result to Bob, Bob will be able to continue to run the streaming algorithm on $T$ to decide the $i$-th bit of Alice's input. Therefore, the lower bound for communication complexity of the Index problem is a space lower bound for streaming dictionary matching with mismatches. Lemma~\ref{lm:space_lower_bound} follows. 
\end{proof}

\bibliographystyle{plain}      
\bibliography{main-streaming}   
\end{document}